\newcommand{\e}{\mathrm{e}}
\newcommand{\rv}{{\bm r}}
\newcommand{\ra}{\rangle}
\newcommand{\la}{\langle}
\newcommand{\w}{\mathrm{W}}
\newcommand{\s}[1]{\sqrt{#1}}
\newcommand{\ket}[1]{\vert #1 \ra}
\newcommand{\bk}[2]{\la #1 \vert #2 \ra}
\newcommand{\ov}[2]{\left\la #1 | #2 \right\ra}
\renewcommand{\(}{\left(}
\renewcommand{\)}{\right)}
\renewcommand{\t}{\theta}
\renewcommand{\o}{\otimes}
\renewcommand{\>}{\rangle}
\newcommand{\R}{\mathbb{R}}
\newcommand{\x}{\mathbf{x}}
\theoremstyle{plain} 
\newtheorem{theorem}{Theorem}
\newcommand{\thmref}[1]{Theorem~\ref{#1}}
\newcommand{\be}{\begin{equation}}
\newcommand{\ee}{\end{equation}}
\begin{document}

\title{Duality and the geometric measure of entanglement of general multiqubit
W states}

\author{Sayatnova Tamaryan}
\affiliation{Theory Department, Yerevan Physics Institute, Yerevan, 375036,
Armenia}

\author{Anthony Sudbery}
\affiliation{Department of Mathematics, University of York, Heslington, York,
YO10 5DD, U.K.}

\author{Levon Tamaryan}
\affiliation{Physics Department, Yerevan State University, Yerevan, 375025,
Armenia}
\affiliation{Theory Department, Yerevan Physics Institute, Yerevan, 375036,
Armenia}

\begin{abstract} We find the nearest product states for arbitrary generalised W
states of $n$ qubits, and show that the nearest product state is essentially
unique if the W state is highly entangled. It is specified by a unit vector in Euclidean $n$-dimensional space. We use this duality between unit vectors and highly entangled W states to find the geometric measure
of entanglement of such states.
\end{abstract}

\pacs{03.67.Mn, 03.65.Ud, 02.10.Xm}

\keywords{quantum entanglement, multiparticle systems, exactly solvable models}

\maketitle

\paragraph{Introduction.}Quantifying entanglement of multipartite pure states
presents a real challenge to physicists. Intensive studies are under way and
different entanglement measures have been proposed over the
years~\cite{ben-schum,ben-rain,shim-95,vedr-rip,negat,robust}. However, it is
generally impossible to calculate their value  because the definition of any
multipartite entanglement measure usually includes a massive optimization over
certain quantum protocols or states~\cite{woot,iso,tri}.

Inextricable difficulties of the optimization are rooted in a tangle of
different obstacles. First, the number of entanglement parameters grows
exponentially with the number of particles involved~\cite{lindenprl}. Second,
in the multipartite setting several inequivalent classes of entanglement
exist~\cite{w,four}. Third, the geometry of entangled regions of robust states
is complicated~\cite{shared}. All of these make the usual optimization methods
ineffective~\cite{shared,sud-09,wei-guh}. Concise and elegant tools are
required to overcome this problem.

A widely used measure for multipartite systems is the geometric measure of
entanglement $E_g$~\cite{wei-03}, i.e.\ the distance from the nearest product
state. For an $n$-part pure state $\psi$  it is defined as  $E_g(\psi)=-2\ln
g(\psi)$, where the maximal product overlap $g(\psi)$ is given by
$$g(\psi)=\max_{u_1,u_2,...,u_k}|\ov{\psi}{u_1u_2...u_k}|,$$
and the maximization is performed over all product states. The maximal product
overlap has many remarkable applications. Among them are: it singles out the
multipartite states applicable for perfect quantum teleportation and superdense
coding~\cite{shared}, it can create a generalized Schmidt decomposition for
arbitrary $n$-part systems~\cite{hig}, it identifies irregularity in channel
capacity additivity~\cite{wern}, it quantifies the difficulty of distinguishing
multipartite quantum states by local means~\cite{local}, it is a good
entanglement estimator for quantum phase transitions in spin
models~\cite{phas-tr}, it detects a one-parameter family of maximally entangled
states~\cite{maximal}, and it can be easily estimated in
experiments~\cite{guh-06}.

In what follows states with $g^2>1/2$ are referred to as slightly entangled,
states with $g^2<1/2$ are referred to as highly entangled and states with
$g^2=1/2$ are referred to as shared quantum states. In this Letter we show how
to calculate the maximal product overlap of an arbitrary W state~\cite{w}. The
method is to establish a one-to-one correspondence between highly entangled W
states and their nearest product
states.

Consider first generalized Greenberger-Horne-Zeilinger states~\cite{ghz}, i.e.\
states that can be written $|\mathrm{GHZ}\> = a|0\ldots0\> + b|1\ldots1\>$ in
some product basis. Such states are fragile under local decoherence, i.e.\ they
become disentangled by the loss of any one party, and they are not highly
entangled in the sense defined above. The geometric measure of these states is
computed easily since the maximal overlap simply takes the value of the modulus
of the larger coefficient, $|a|$ or $|b|$ \cite{shim-04}. Accordingly, the
nearest separable state is the product state with the larger coefficient. Thus
many generalized GHZ states with different maximal overlaps can have the same
nearest product state.

Consider now generalized W-states~\cite{par-08}, which can be written
\begin{equation}\label{0.w}
\ket{\w_n}=c_1\ket{100...0} + c_2\ket{010...0} + \cdots + c_n\ket{00...01}.
\end{equation}
Without loss of generality we consider only the case of positive parameters
$c_k$ since the phases of the coefficients $c_k$ can be eliminated by
redefinitions of local states $\ket{1_k},\,k=1,2,...,n$.
The states \eqref{0.w} are robust against decoherence~\cite{raz-02}, i.e.\ loss
of any $n-2$ parties still leaves them in a bipartite entangled state.
Surprisingly, if the state is slightly entangled, then we have the same
situation as for generalized GHZ states: the maximal overlap is the largest
coefficient and, as before, many states can have the same nearest product
state~\cite{toward}. However, the situation is changed drastically when the
state is highly entangled. The calculation of the maximal overlap in this case
is a very difficult problem and the maximization has been performed only for
relatively simple
systems~\cite{wei-03,shim-04,tri,hay-sym,toward,sud-09,mgbbb,zch}.

On the other hand, different highly entangled W-states have different nearest
product states. This makes it possible to map the W-state to its nearest
product state and quickly obtain its geometric measure of entanglement. More
precisely, we construct two bijections. The first one creates a map between
highly entangled $n$-qubit W states and $n$-dimensional unit vectors $\x$. The
second one does the same between  $n$-dimensional unit vectors and $n$-part
product states. Thus we obtain a double map, or {\it duality}, as
follows
\begin{equation}\label{bijec}
 \ket{\w_n}\; \leftrightarrow\; \x\; \leftrightarrow\;
\ket{u_1}\o\ket{u_2}\o\cdots\o\ket{u_n}.
\end{equation}

The main advantage of the map is that if one knows any of the three vectors,
then one instantly finds the other two.

\smallskip

\paragraph{Classifying map.}Now we prove a theorem that provides a basis for
the map.

\begin{theorem}  \label{Wtou} Let $\ket{\w_n}$ be an arbitrary W state
\eqref{0.w} with non-negative real coefficients $c_i$, and let
$\ket{u_1}\o\ket{u_2}\o\cdots\o\ket{u_n}$ be its nearest product state. Then
the phase of $\ket{u_k}$ can be chosen so that
 $$\ket{u_k}=\sin\t_k\ket{0}+\cos\t_k\ket{1},\, 0\leq\t_k\leq\frac{\pi}{2}, \,
k=1,2,...,n.$$
where
\begin{equation}\label{0.dircos}
\cos^2\t_1+\cos^2\t_2+\cdots+\cos^2\t_n=1.
\end{equation}
\end{theorem}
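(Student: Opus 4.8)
The plan is to parametrise product states, use the triangle inequality to reduce to real non-negative amplitudes, and then read off \eqref{0.dircos} from the coordinate-wise optimality conditions at the maximising configuration.

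First I would write a general product state as $\bigotimes_{k=1}^{n}(\alpha_k\ket{0}+\beta_k\ket{1})$ with $|\alpha_k|^2+|\beta_k|^2=1$. Its overlap with \eqref{0.w} equals $\sum_k c_k\bar\beta_k\prod_{j\ne k}\bar\alpha_j$, whose modulus is at most $\sum_k c_k|\beta_k|\prod_{j\ne k}|\alpha_j|$ by the triangle inequality, and this bound is saturated by the product state obtained by replacing every amplitude by its modulus. Hence the maximal product overlap is attained at a product state with non-negative real amplitudes; setting $|\alpha_k|=\sin\t_k$ and $|\beta_k|=\cos\t_k$ with $\t_k\in[0,\pi/2]$ gives the stated form of the factors $\ket{u_k}$, and it remains to locate the maximum of
$$f(\t_1,\dots,\t_n)=\sum_{k=1}^{n}c_k\cos\t_k\prod_{j\ne k}\sin\t_j$$
over the cube $[0,\pi/2]^n$. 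This maximum is strictly positive, since $f$ already equals $c_{\max}$ at the vertex where one $\t_k$ is $0$ and the rest are $\pi/2$.

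The key observation is that, for each fixed $k$, the function separates as $f=A_k\cos\t_k+B_k\sin\t_k$, where $A_k=c_k\prod_{j\ne k}\sin\t_j$ and $B_k=\sum_{l\ne k}c_l\cos\t_l\prod_{j\ne l,\,k}\sin\t_j$ do not depend on $\t_k$ and are non-negative. If the maximiser lies in the open cube $(0,\pi/2)^n$, then $\partial f/\partial\t_k=0$ for every $k$, which gives $B_k=A_k\tan\t_k$ and hence $f=A_k/\cos\t_k$, i.e.\ $\cos\t_k=A_k/f$. Substituting this back into $f=\sum_k A_k\cos\t_k$ yields $f^2=\sum_k A_k^2$, and therefore $\sum_k\cos^2\t_k=\sum_k A_k^2/f^2=1$, which is exactly \eqref{0.dircos}.

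The main thing that needs care is the boundary of the cube. If some $\t_k=\pi/2$ then $\ket{u_k}=\ket{0}$ contributes $\cos^2\t_k=0$ and the remaining angles maximise the analogous $(n-1)$-qubit overlap, so \eqref{0.dircos} follows by induction on $n$, the case $n=1$ being immediate. If some $\t_k=0$ then $\ket{u_k}=\ket{1}$, only the $k$-th term of $f$ survives, $f=c_k\prod_{j\ne k}\sin\t_j\le c_k$, and maximality forces $\t_j=\pi/2$ for all $j\ne k$, so again $\sum_j\cos^2\t_j=1$. Thus \eqref{0.dircos} holds in every case; for highly entangled W states one expects the maximiser to be interior, so the identity $f^2=\sum_k A_k^2$ does the work directly.
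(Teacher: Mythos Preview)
Your argument is correct and, at its core, matches the paper's: both read off \eqref{0.dircos} from the first-order optimality conditions, and your identities $f\cos\t_k=A_k$ and $f\sin\t_k=B_k$ are precisely the two components of the paper's nonlinear eigenvalue equations \eqref{0.stat-eq}. Two packaging differences are worth noting. First, to eliminate phases the paper keeps $\ket{u_k}=\sin\t_k\ket{0}+e^{i\phi_k}\cos\t_k\ket{1}$ and deduces from the $\ket{1}$-component of \eqref{0.stat-eq} that $ge^{i\phi_k}$ is real, forcing all $\phi_k$ to coincide; your triangle-inequality reduction to non-negative real amplitudes is more direct and avoids the complex bookkeeping. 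Second, you split off the boundary of the cube and treat $\t_k=0$ and $\t_k=\pi/2$ separately (by induction in the latter case); the paper instead multiplies the $\ket{0}$-component of \eqref{0.stat-eq} by $\sin\t_k$ and substitutes the $\ket{1}$-component, obtaining $g\sum_l\cos^2\t_l=g$ without ever dividing by $\cos\t_k$, so no case analysis is needed once $g>0$. Either route is fine; yours is a little more elementary at the start, the paper's a little slicker at the finish.
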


\begin{proof} The nearest product state is a stationary point for the overlap
with $\ket{\w_n}$, so the states $\ket{u_k}$ satisfy the nonlinear eigenvalue
equations \cite{hig,wei-03,tri}
\begin{equation}\label{0.stat-eq}
\bk{u_1u_2\cdots\widehat{u_k}\cdots u_n}{\w_n}=g\ket{u_k};\;k=1,2,\cdots,n
\end{equation}
where the caret means exclusion. We can choose the phase of $\ket{u_k}$ so that
$
\ket{u_k} = \sin\t_k\ket{0} + \e^{i\phi_k}\cos\t_k\ket{1},
$
and then \eqref{0.stat-eq} gives the pair of equations
\begin{subequations}\label{0.eqn}
\begin{equation}\label{0.eqcos}
c_k\prod_{j\neq k}\sin\t_j = g\e^{i\phi_k}\cos\t_k,
\end{equation}
\begin{equation}\label{0.eqsin}
\sum_{l\neq k}\e^{-i\phi_l}c_l\cos\t_l\prod_{j\neq k,l}\sin\t_j = g\sin\t_k.
\end{equation}
\end{subequations}
Eq. \eqref{0.eqcos} shows that $g\e^{i\phi_k}$ is real, so $\phi_k = -\arg(g)$
is independent of $k$. Then the modulus of the overlap $|\<u_1\cdots
u_n|\w_n\>|$ is independent of $\phi$, so we can assume that $\phi = 0$. Now
multiplying eq.(\ref{0.eqsin}) by $\sin\t_k$ and using eq.(\ref{0.eqcos}) gives
Eq.(\ref{0.dircos}).
\end{proof}
Thus the angles $\cos\t_k$ define a unit $n$-dimensional Euclidean vector $\x$. We can also define a length $r$ as follows. From Eq.(\ref{0.eqcos}) it follows
that the ratio $\sin2\t_k/c_k$ does not depend on $k$. If this ratio is non-zero we can define
\begin{equation}\label{0.rmod}
\frac{1}{r} \equiv \frac{\sin2\t_1}{c_1} = \frac{\sin2\t_2}{c_2} = \cdots =
\frac{\sin2\t_n}{c_n}.
\end{equation}

\smallskip

\paragraph{Highly entangled W states.}Equations (5) admit a trivial
solution $\sin2\t_k=0,\,k=1,2,\cdots,n$ and a special solution with nonzero
values of all sines. The trivial solution gives the largest coefficient of
$|\w_n\>$ for the
maximal overlap and is valid for slightly entangled states. We consider them
later and now focus on the special solutions. From Eq.(\ref{0.rmod}) it follows
that
\begin{equation}\label{1.cos}
\cos^2\t_k=\frac{1}{2}\(1\pm\sqrt{1-\frac{c_k^2}{r^2}}\),\;k=1,2,\cdots,n.
\end{equation}
The plus sign means that $\cos2\t_k > 0$. Then from Eq.(\ref{0.dircos}) it
follows that this is possible for at most one angle; specifically, we prove
that if $\cos2\t_k>0$ for some $k$, then $c_k$ is the largest coefficient in
Eq.(\ref{0.w}). Suppose $\cos2\t_k>0$ but $c_k$ is not the largest coefficient
and there exists a greater coefficient, say $c_l$. Then from Eq.(\ref{0.rmod})
it follows that $\sin2\t_l>\sin2\t_k>0$ and consequently
$|\cos2\t_l|<|\cos2\t_k|$. Now we rewrite Eq.(\ref{0.dircos}) as follows:
\begin{equation}\label{1.2theta}
-\cos2\t_1-\cos2\t_2-\cdots-\cos2\t_n=n-2.
\end{equation}
From $|\cos2\t_l|<|\cos2\t_k|$ and $\cos2\t_k>0$ it follows that
$-\cos2\t_k-\cos2\t_l<0$  which is in contradiction with Eq.(\ref{1.2theta}).
Thus $c_k$ must be the largest coefficient.

Without loss of generality we assume that $0 \le c_1 \le \cdots \le c_n$. Then
in \eqref{1.cos} we must take the $-$ sign for $k = 1,\ldots,n-1$ and
\eqref{0.dircos} becomes
\begin{equation}\label{1.r}
\s{1 - \frac{c_1^2}{r^2}}  + \cdots + \s{1 - \frac{c_{n-1}^2}{r^2}} \pm \s{1 -
\frac{c_n^2}{r^2}} =
n-2
\end{equation}
We will denote the left-hand sides of these equations as $f_\pm(r)$. We also
use $f_0(r)$ to denote this expression without the last term. The function
$r(c_1,c_2,...,c_n)$ defined by $f_+(r)=n-2$ is a completely symmetric function
of the state parameters  $c_k$. In contrast, the function defined by
$f_-(r)=n-2$  is an asymmetric function since its dependence on the maximal
coefficient $c_n$ is different. Thus in equation \eqref{1.r} the upper and
lower signs describe symmetric and asymmetric entangled regions of highly
entangled states, respectively.

For highly entangled states, eqs. \eqref{1.r}$_\pm$ uniquely define $r$ as a
function of the state parameters $c_k$. More precisely,
\begin{theorem}\label{rsolutions}
There are two critical values $r_1$ and $r_2$ of the largest coefficient $c_n$,
i.e.\ functions of $c_1,\ldots,c_{n-1}$ such that
\begin{enumerate}
\item If $c_n \le r_1$, there is a unique solution of (\ref{1.r}$_+$) and no
solution of (\ref{1.r}$_-$);
\item If $c_n = r_1$, both (\ref{1.r}$_+$) and (\ref{1.r}$_-$) have a unique
solution, the same for both;
\item If $r_1 < c_n \le r_2$, there is no solution of (\ref{1.r})$_+$ and a
unique solution of (\ref{1.r}$_-$);
\item If $c_n > r_2$, neither (\ref{1.r}$_+$) nor (\ref{1.r}$_-$) has a
solution. In this case the state $|\w_n\>$ is slightly entangled.
\end{enumerate}
\end{theorem}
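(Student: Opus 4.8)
The plan is to study the three functions $f_0(r)$, $f_+(r)=f_0(r)+\sqrt{1-c_n^2/r^2}$, and $f_-(r)=f_0(r)-\sqrt{1-c_n^2/r^2}$ as functions of $r$ on the interval where they are real and to locate where each attains the target value $n-2$. First I would record the domain constraints: since $0\le c_1\le\cdots\le c_n$, each square root $\sqrt{1-c_k^2/r^2}$ is real and non-negative precisely when $r\ge c_k$, so $f_0$ and $f_\pm$ are all defined on $r\ge c_n$, with $f_\pm$ also needing $r\ge c_n$ for the last term. At $r=c_n$ we have $f_+(c_n)=f_-(c_n)=f_0(c_n)=\sqrt{1-c_1^2/c_n^2}+\cdots+\sqrt{1-c_{n-1}^2/c_n^2}$, and as $r\to\infty$ each term tends to $1$, so $f_0(r)\to n-1$ and $f_\pm(r)\to n-1$ as well (since the last term also tends to $1$ for $f_+$ and is subtracted off for $f_-$, giving $f_+\to n$ and $f_-\to n-2$; I must get these limits straight).

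The key monotonicity input is that $\frac{d}{dr}\sqrt{1-c^2/r^2} = \frac{c^2}{r^3\sqrt{1-c^2/r^2}}>0$ for $r>c>0$, so $f_0$ is strictly increasing on $(c_{n-1},\infty)$ (strictly, as long as at least one $c_k>0$, which holds for a genuine W state), and therefore so is $f_+=f_0+\sqrt{1-c_n^2/r^2}$. Hence $f_+$ is a strictly increasing bijection from $[c_n,\infty)$ onto $[\,f_0(c_n),\,n\,)$, and the equation $f_+(r)=n-2$ has a solution iff $f_0(c_n)\le n-2$, and then the solution is unique. This is the content of item 1 (the side condition $c_n\le r_1$) and half of item 2. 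The threshold $r_1$ is defined implicitly by $f_0(r_1)=n-2$ — i.e. $r_1$ is the value of the largest coefficient at which equality $f_0(c_n)=n-2$ first fails — and since $f_0$ is itself increasing in its final argument $c_n$ and decreasing in the others, $r_1$ is a well-defined function of $c_1,\ldots,c_{n-1}$. Note at $c_n=r_1$ we get $f_0(c_n)=n-2$, and then $f_-(c_n)=f_0(c_n)=n-2$ too, with the last square root vanishing, so $r=c_n=r_1$ solves $(\ref{1.r}_-)$ as well; that gives item 2.

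For $f_-$ the analysis is the delicate part and is where I expect the main obstacle: $f_-$ is not monotone. We have $f_-(c_n)=f_0(c_n)$, and $f_-(r)\to n-2$ as $r\to\infty$. Its derivative is $f_0'(r)-\frac{c_n^2}{r^3\sqrt{1-c_n^2/r^2}}$; near $r=c_n$ the subtracted term blows up, so $f_-$ is \emph{decreasing} just above $c_n$, while for large $r$ the bracket $f_0'(r)$ dominates (both terms decay, but the negative one faster) so $f_-$ is eventually \emph{increasing}. So $f_-$ dips to a unique interior minimum $m(c_n)$ and then rises back to the asymptote $n-2$ from below. Consequently $f_-(r)=n-2$ has a solution on $(c_n,\infty)$ iff $f_-(c_n)=f_0(c_n)\ge n-2$, and when $f_0(c_n)>n-2$ strictly the solution is unique (the branch where $f_-$ is decreasing from $f_0(c_n)>n-2$ down through $n-2$; the rising branch only approaches $n-2$ asymptotically and never reaches it). The condition $f_0(c_n)\ge n-2$ is exactly $c_n\ge r_1$. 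Finally the upper threshold $r_2$ comes from the requirement that the trivial/slightly-entangled transition is respected: once $c_n$ is so large that the special (all-sines-nonzero) solution collides with or is overtaken by the trivial solution $\sin 2\t_k=0$, no valid special solution remains; concretely $r_2$ is determined by the borderline $g^2=1/2$, equivalently by the point at which the special solution's overlap $g$ drops to the value $c_n$ of the trivial solution. I would pin $r_2$ down by computing $g$ along the $f_-$ branch (using $g=c_k/\sin 2\t_k \cdot (\text{product of sines})$ from \eqref{0.eqcos} and \eqref{0.rmod}) and setting it equal to $c_n$; showing this happens at a single value of $c_n$, beyond which $(\ref{1.r}_-)$ either has no solution or only an unphysical one, yields items 3 and 4. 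The crux throughout is the sign analysis of $f_-'$ and confirming there is exactly one interior critical point, so that the "decreasing then increasing" shape — and hence the uniqueness in item 3 — is rigorous.
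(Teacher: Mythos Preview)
Your treatment of $f_+$ and the identification of $r_1$ via $f_0(r_1)=n-2$ are correct and match the paper. The genuine gap is in your analysis of $f_-$, specifically your claim that ``the negative one [decays] faster'' so that $f_-$ is eventually increasing. Both terms in $f_-'$ decay like $r^{-3}$: expanding, $f_0'(r)\sim r^{-3}\sum_{k<n}c_k^2$ while the subtracted term is $\sim c_n^2 r^{-3}$, so $f_-'(r)\sim (r_2^2-c_n^2)r^{-3}$ with $r_2^2:=c_1^2+\cdots+c_{n-1}^2$. Equivalently $f_-(r)=(n-2)+(c_n^2-r_2^2)/(2r^2)+O(r^{-4})$, so $f_-$ approaches its asymptote $n-2$ from \emph{below} when $c_n<r_2$ and from \emph{above} when $c_n>r_2$. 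This asymptotic dichotomy is exactly where $r_2$ comes from in the paper; it is an explicit function of $c_1,\ldots,c_{n-1}$, not something to be extracted later from a comparison of $g$ with $c_n$ or from the condition $g^2=1/2$ (that route is circular, since \thmref{nearestproduct} is proved \emph{using} the present theorem).

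Once you have the correct asymptotics, the case split is clean. For $c_n>r_2$ one checks that $f_-'(r)<0$ for \emph{all} $r\ge c_n$: since $c_k\le c_n$ gives $c_k^2/\sqrt{1-c_k^2/r^2}\le c_k^2/\sqrt{1-c_n^2/r^2}$, summing over $k<n$ yields $r^3 f_-'(r)\le (r_2^2-c_n^2)/\sqrt{1-c_n^2/r^2}<0$. Hence $f_-$ decreases monotonically from $f_0(c_n)>n-2$ toward $n-2$ without ever reaching it, and there is no solution. For $r_1<c_n<r_2$, your ``dip then rise'' picture is valid, and $f_-(c_n)>n-2$ together with $f_-(r)<n-2$ for large $r$ gives a solution on the decreasing branch. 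Your stated criterion ``$f_-(r)=n-2$ has a solution iff $f_0(c_n)\ge n-2$'' is therefore false as written: when $c_n>r_2$ one has $f_0(c_n)>n-2$ yet no solution exists.
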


The value $r_1$ is the solution of $f_0(r_1) = n-2$, which exists and is unique
since $f_0(c_{n-1}) < n-2$ and $f_0(r) \rightarrow n - 1$ monotonically as $r
\rightarrow \infty$; and $r_2$ is defined by
\be
r_2^2 = c_1^2 + \cdots + c_{n-1}^2.
\ee
Then $r_2 \ge r_1$, for $f_0(r_2)\ge n - 2 = f_0(r_1)$ using $\sqrt{x} \ge x$
for $0 \le x \le 1$. Since $f_0$ is an increasing function of $r,$ it follows
that $r_2 \ge r_1$. Now the theorem follows from the following properties of
the functions $f_\pm(r)$($f'_-$ is the derivative of $f_-$):

\smallskip
1. $f_0$ and $f_+$ are monotonically increasing functions of $r$.

2. $f_+(r) \rightarrow n$ as $r \rightarrow \infty$.

3. If $c_n \le r_1$, $f_+(c_n) = f_0(c_n) \le f_0(r_1) = n-2$.

4. If $c_n \ge r_1$, then $f_+(r) \ge n-2$ for all $r > r_1$.

5. If $c_n < r_1$, then $f_-(c_n) < n - 2$.

6. If $c_n > r_1$, then $f_-(c_n) > n - 2$.

7. If $c_n < r_2$, then $f_-(r) < n - 2$ for large $r$.

8. If $c_n > r_2$ then $f_-(r) > n - 2$ for large $r$.

9. $f'_-(c_n + \epsilon) < 0$ for small $\epsilon$.

${}$\!\!\!10. If $c_n > r_2$, then $f'_-(r) < 0$ for all $r \ge c_n$.

\smallskip

These properties are illustrated in Figure \ref{curves}.

\begin{figure}[ht!]
\begin{center}
\includegraphics[width=7cm]{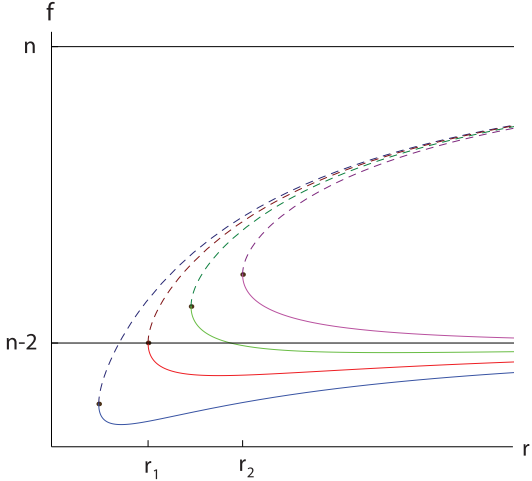}
\caption[fig1]{\label{curves}(Color online) The behaviour of the functions
$f_\pm$ for five-qubit W states. The function $f_+(r)$ (dotted line) and
$f_-(r)$ (solid line) are  plotted against $r$ in the four cases $c_n < r_1$,
$c_n = r_1$, $r_1 < c_n < r_2$ and $c_n = r_2$.}
\end{center}
\end{figure}

\paragraph{Geometric measure.} We can now identify the nearest product state,
and the largest product state overlap $g(|\w_n\>)$, for any W-state $|\w_n\>$,
as follows.

\begin{theorem} \label{nearestproduct} If $c_n \ge 1/2$, the state $|\w_n\>$
defined by \eqref{0.w} is slightly entangled. Its nearest product state is
$|0\ldots 01\>$, with overlap $g(|\w_n\>) = c_n$.

If $c_n \le 1/2$, the state $|\w_n\>$ is highly entangled and has nearest
product state
\be\label{product}
| u_1\>\ldots|u_n\> \;\;\text{ where }\;\; |u_k\> = \sin\t_k|0\> +
| \e^{i\phi}\cos\t_k|1\>,
\ee
with which its overlap is
\be \label{gdef}
g = 2r\sin\t_1\sin\t_2\ldots\sin\t_n.
\ee
Here $r$ is the solution of (\ref{1.r})$_\pm$, whose existence and uniqueness
are guaranteed by Theorem 2; the phase $\phi$ is arbitrary; and $\t_k$ is given
by \eqref{1.cos} with the $-$ sign for $k = 1,\ldots,n-1$, the $-$ sign for $k
= n$ if $r$ satisfies (\ref{1.r}$_+$), the $+$ sign if $r$ satisfies
(\ref{1.r}$_-$).
\end{theorem}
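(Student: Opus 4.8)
The plan is to feed the stationary-point classification of Theorems~\ref{Wtou} and~\ref{rsolutions} into a compactness argument, isolate the maximiser with one elementary product inequality, and then read off the two labels.

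First I would note that $g=g(\ket{\w_n})$ is attained, the space of $n$-part product states being compact and $\ket{u_1}\o\cdots\o\ket{u_n}\mapsto|\bk{u_1\cdots u_n}{\w_n}|$ continuous; since $g\ge c_n>0$ the maximiser lies in the smooth locus and so is a stationary point, which by the proof of Theorem~\ref{Wtou} may be written with $\ket{u_k}=\sin\t_k\ket0+\cos\t_k\ket1$, $\t_k\in[0,\half\pi]$, satisfying \eqref{0.dircos}. That constraint forces a dichotomy on the maximiser: if $\sin\t_m=0$ for some $m$ then $\cos^2\t_m=1$, hence $\cos\t_j=0$ for $j\ne m$, so the maximiser is $\ket{0\cdots010\cdots0}$ with overlap $c_m$ --- a ``trivial'' critical point; otherwise every $\sin\t_k>0$, so by \eqref{0.eqcos} the ratio $\sin2\t_k/c_k$ is a nonzero constant $1/r$ and the whole discussion around \eqref{0.rmod}--\eqref{1.r} applies verbatim (so \eqref{1.cos} holds, and by the argument preceding Theorem~\ref{rsolutions} its $+$ sign can occur only at the index $n$ of the largest coefficient), whence Theorem~\ref{rsolutions} supplies such a ``special'' solution, with $r$ unique, exactly when $c_n\le r_2$ --- equivalently, since $\sum_kc_k^2=1$ gives $r_2^2=1-c_n^2$, exactly when $c_n\le\tfrac1{\s2}$.

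The crux is to compare the two values. At a special solution the overlap $\sum_kc_k\cos\t_k\prod_{j\ne k}\sin\t_j$ collapses, by \eqref{0.eqcos} and \eqref{0.dircos}, to $\sum_kg\cos^2\t_k=g$, and substituting $c_k=2r\sin\t_k\cos\t_k$ from \eqref{0.rmod} gives $g=2r\sin\t_1\cdots\sin\t_n$, which is \eqref{gdef}; dividing by $c_n=2r\sin\t_n\cos\t_n$ and using \eqref{0.dircos},
\be
\frac{g^2}{c_n^2}=\frac{\prod_{j<n}\sin^2\t_j}{\cos^2\t_n}
=\frac{\prod_{j<n}\bigl(1-\cos^2\t_j\bigr)}{1-\sum_{j<n}\cos^2\t_j}\;\ge\;1 ,
\ee
by the elementary bound $\prod_{j<n}(1-a_j)\ge1-\sum_{j<n}a_j$ for $a_j\in[0,1]$, with equality iff at most one $a_j$ is nonzero. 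Since $\cos\t_j\ne0\Leftrightarrow c_j\ne0$ here, equality would force $\ket{\w_n}$ to have at most two nonzero coefficients, hence $c_n\ge\tfrac1{\s2}$; so when $c_n<\tfrac1{\s2}$ the product state built from \eqref{1.cos} with the (existing, since $c_n<r_2$) $r$ of Theorem~\ref{rsolutions} has overlap $2r\sin\t_1\cdots\sin\t_n>c_n=\max_m c_m$. As the maximiser is a critical point that is trivial or special, it cannot be trivial, so it is special; then \eqref{0.eqcos} and \eqref{0.rmod} force \eqref{1.cos}, and \eqref{0.dircos} forces \eqref{1.r}, so by the uniqueness of $r$ and the already-fixed signs the maximiser coincides with that built state, establishing \eqref{product}--\eqref{gdef}. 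When $c_n>\tfrac1{\s2}$ no special solution exists, so the maximiser is trivial with overlap $\max_m c_m=c_n$, attained at $\ket{0\cdots01}$; at the borderline $c_n=\tfrac1{\s2}$ both descriptions agree, $g=c_n$.

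Finally the labels. The first case has $g=c_n>\tfrac1{\s2}$, so $g^2>\half$. In the highly entangled case, using $g^2=4r^2\prod_k\sin^2\t_k$ and $1=\sum_kc_k^2=4r^2\sum_k\sin^2\t_k\cos^2\t_k$ together with \eqref{0.dircos}, the inequality $g^2<\half$ reduces to $e_2>\prod_k(1-\cos^2\t_k)$, where $e_j$ denotes the $j$-th elementary symmetric function of the numbers $\cos^2\t_k$. Since $e_1=\sum_k\cos^2\t_k=1$, Maclaurin's inequalities give $1=e_1>e_2>\cdots$, so $e_2-\prod_k(1-\cos^2\t_k)=e_3-e_4+e_5-\cdots>0$ --- the regime $c_n<\tfrac1{\s2}$ forces at least three of the $c_k$, hence at least three of the $\cos^2\t_k$, to be nonzero, so $e_3>0$ and the alternating sum of a decreasing positive tail is positive. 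I expect the real work to lie in this last reduction and in keeping the sign bookkeeping of Theorem~\ref{rsolutions} straight; one should also record the harmless non-uniqueness when several coefficients tie for the largest, which only permutes which qubit carries the $+$ sign in \eqref{1.cos} and leaves $g$ unchanged.
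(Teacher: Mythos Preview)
Your argument is essentially the paper's: classify the stationary points as trivial ($g=c_m$) or special, and for the special one use the same product inequality $\prod_{j<n}(1-a_j)\ge 1-\sum_{j<n}a_j$ to get $g\ge c_n$. Two points are worth flagging.

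First, on the threshold: you correctly derive $c_n\lessgtr 1/\sqrt 2$ as the dividing line (via $r_2^2=1-c_n^2$), which is what the paper's own proof and Theorem~\ref{rsolutions} actually use; the ``$1/2$'' in the statement is a slip for $1/\sqrt 2$ (equivalently $c_n^2\lessgtr 1/2$), and you should say so rather than silently repair it.

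Second, on $g^2<\tfrac12$: the paper proves this via the inductive inequality \eqref{ineq2}, i.e.\ $\sum y_k\ge\sum y_k^2+2\prod y_k$ under $\sum y_k=n-1$, which in your variables $x_k=\cos^2\t_k$ is exactly your target $e_2\ge\prod_k(1-x_k)$. Your symmetric-function route is a genuine alternative and is fine once you note that $e_2-\prod_k(1-x_k)=e_3-e_4+\cdots$; but the appeal to ``Maclaurin's inequalities give $1=e_1>e_2>\cdots$'' is not quite right, since Maclaurin compares the normalised means $(e_k/\binom{n}{k})^{1/k}$, not the $e_k$ themselves. The monotonicity you need does hold under the constraint $e_1=\sum x_k=1$: from $e_{k+1}=\frac{1}{k+1}\sum_j x_j\,e_k^{(j)}\le\frac{1}{k+1}\bigl(\sum_j x_j\bigr)e_k=e_k/(k+1)$ one gets $e_{k+1}\le e_k/(k+1)<e_k$ for $k\ge1$, and hence the alternating tail $e_3-e_4+\cdots$ is strictly positive whenever $e_3>0$. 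Replace the Maclaurin reference with this one-line estimate and your proof goes through. The added compactness remark at the start, ensuring the maximum is attained at a stationary point, is a useful clarification the paper leaves implicit.
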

\begin{proof} The nonlinear eigenvalue equations \eqref{0.stat-eq} always have
$n$ solutions
\[
g = c_k,\qquad |u_i\> = \begin{cases} |0\> \text{ if } i\neq k,\\|1\> \text{ if
} i = k\end{cases}, \quad k = 1\ldots n
\]
If $c_n \ge \/2$, i.e.\ in case (4) of Theorem 2, there are no other stationary
values, so the largest overlap $g(|\w_n\>)$ equals the largest coefficient
$c_n$, the corresponding product state being $|0\ldots 01\>$.

If $c_n < 1/2$ there is another stationary value given by \eqref{gdef}. We will
now show that this is larger than any of the trivial stationary values $c_k$.
We use the following inequality: If $y_1,\ldots , y_n$ are real numbers lying
between 0 and 1, and satisfying $y_1 + \cdots + y_n \le 1$, then
\be \label{ineq1}
(1-y_1)(1-y_2)\cdots(1-y_n) \ge 1 - y_1 -y_2 - \ldots -y_n.
\ee
This is readily proved by induction.  We can apply \eqref{ineq1} to $n-1$ terms
of Eq.(\ref{0.dircos}) to get
$$(1-\cos^2\t_1)\cdots(1-\cos^2\t_{n-1})\ge 1-\cos^2\t_1 - \cdots -
\cos^2\t_{n-1}$$
or
\be
\sin^2\t_1\sin^2\t_2\sin^2\t_{n-1} \ge \cos^2\t_n.
\ee
Now from Eq.(\ref{0.eqcos}) it follows that $g^2 \ge c_n^2$. Thus $g$ is the
maximal product overlap, and the nearest product state is $|u_1\>\ldots u_n\>$.

Next we prove that if $|\w_n\>$ is normalised, then $g^2 < 1/2$. For this we
need another inequality: If $y_1,\ldots , y_n$ are real numbers lying between 0
and 1, and satisfying $y_1 + \cdots + y_n = n-1$, then
\be \label{ineq2}
y_1 + \cdots + y_n \ge y_1^2 + \cdots + y_n^2 + 2y_1y_2\ldots y_n.
\ee
This can also be proved by induction.

From \eqref{0.rmod}, and using $c_1^2 + \cdots + c_n^2 = 1$, we find
\be\label{rfromtheta}
r^2 = \frac{1}{\sin^2 2\t_1 + \cdots + \sin^2 2\t_n}.
\ee
Hence \eqref{gdef} gives
\be\label{gfromtheta}
g^2 = \frac{y_1y_2\ldots y_n}{y_1(1 - y_1) + \cdots y_n(1 - y_n)}
\ee
where $y_k = \sin^2\t_k$. But $y_1 + \cdots + y_n = n -1$, so the inequality
\eqref{ineq2} applies, and gives $g^2 \le 1/2$.
\end{proof}

Finally, we summarise the correspondence between highly entangled W-states,
their nearest product states, and unit vectors in $\R^n$.
\begin{theorem}\label{onetoone}
There is a 1:1 correspondence between highly entangled states $|\w_n\>$ defined
by \eqref{0.w}, their nearest product states with real non-negative
coefficients, and unit vectors $\x\in \R^n$ with $0<x_k<1/\sqrt{2}$ ($k =
1,\ldots, n-1$), $0 < x_n < 1$.
\end{theorem}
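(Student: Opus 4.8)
The plan is to realise the two arrows of \eqref{bijec} as explicit maps and check that they are mutually inverse. The arrow between unit vectors and product states is merely a change of notation: to $\x\in\R^n$ in the stated range assign $\ket{u_1}\o\cdots\o\ket{u_n}$ with $\ket{u_k}=\sin\t_k\ket0+\cos\t_k\ket1$ and $\t_k=\arccos x_k\in[0,\tfrac\pi2]$, and conversely read $x_k=\cos\t_k$ off any product state of this form. The condition $\sum_k x_k^2=1$ is exactly \eqref{0.dircos}, $0<x_k<1/\s2$ for $k<n$ says $\tfrac\pi4<\t_k<\tfrac\pi2$, and $0<x_n<1$ says $0<\t_n<\tfrac\pi2$; by Theorems 1 and 3 these are precisely the features of the nearest product state (with phase $\phi=0$) of a highly entangled W state, the $-$ branch of \eqref{1.cos} being forced for $k<n$ and $c_k>0$ excluding the endpoints. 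So this arrow is a bijection by inspection, and the content lies in the arrow between W states and unit vectors.

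Define $\alpha\colon\ket{\w_n}\mapsto\x$, $x_k=\cos\t_k$, by reading the $\t_k$ off the (essentially unique) nearest product state; this is well defined since Theorem 2 makes $r$, hence the $\t_k$ through \eqref{1.cos}, unique, while the discussion following Theorem 1 together with $c_k>0$ puts $\x$ in the stated range with $\sum_k x_k^2=1$. Define $\beta$ the other way: to $\x$ in the range assign $\t_k=\arccos x_k$ and $c_k=r\sin2\t_k$, with $r$ fixed by $\sum_k c_k^2=1$, i.e.\ $r^2=(\sum_j\sin^2 2\t_j)^{-1}$ as in \eqref{rfromtheta}; the $c_k$ are positive, so $\beta(\x)=\ket{\w_n}$ is a genuine normalised W state (that it is \emph{highly} entangled will come out below). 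Then $\beta\circ\alpha=\mathrm{id}$ is immediate: the angles produced by $\alpha$ obey $c_k/\sin2\t_k=r$ by \eqref{0.rmod} and $r^2=(\sum_j\sin^2 2\t_j)^{-1}$ by \eqref{rfromtheta}, so $\beta$ returns $r\sin2\t_k=c_k$.

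For $\alpha\circ\beta=\mathrm{id}$ I must show that the product state built from $\x$ is the nearest product state of $\ket{\w_n}=\beta(\x)$. First, substituting $c_k=r\sin2\t_k$ and $g=2r\sin\t_1\cdots\sin\t_n$ into \eqref{0.eqn} verifies that the $\t_k$ form a stationary point (the identity $\sum_l\cos^2\t_l=1$ closes \eqref{0.eqsin}). Second, applying \eqref{ineq1} to the $n-1$ numbers $\{\cos^2\t_j\}_{j\neq k}$, which sum to $\sin^2\t_k\le1$ by \eqref{0.dircos}, gives $\prod_{j\neq k}\sin^2\t_j\ge\cos^2\t_k$, so \eqref{0.eqcos} yields $g^2\ge c_k^2$ for every $k$; moreover $g^2$ is then given by \eqref{gfromtheta} with $y_k=\sin^2\t_k$ and $\sum_k y_k=n-1$, so \eqref{ineq2} gives $g^2<\tfrac12$, strictly because the open ranges keep every $y_k$ inside $(0,1)$. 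Hence $\beta(\x)$ is highly entangled and, by Theorem 3, its nearest product state is the special solution, whose value of $r$ is the unique root of \eqref{1.r}$_\pm$ guaranteed by Theorem 2. Third, since $1-c_k^2/r^2=\cos^2 2\t_k$, with $\cos2\t_k<0$ for $k<n$ and with $\cos2\t_n$ carrying the sign that separates the two ranges of $x_n$, equation \eqref{1.2theta} reduces to $f_+(r)=n-2$ when $x_n<1/\s2$ and to $f_-(r)=n-2$ when $x_n>1/\s2$, and the branches in \eqref{1.cos} match the prescription of Theorem 3; so by uniqueness this $r$ is the one Theorem 3 uses, the nearest product state of $\beta(\x)$ is exactly $\ket{u_1}\o\cdots\o\ket{u_n}$, and $\alpha(\beta(\x))=\x$.

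I expect this last, sign-and-branch part to be the main obstacle: one must keep straight which of \eqref{1.r}$_\pm$ and which branch of \eqref{1.cos} is active, confirm that the coefficient playing the role of $c_n$ really is the largest so that Theorems 2 and 3 apply verbatim (this is where $x_n$ being the maximal component, with $x_1\le\cdots\le x_{n-1}$, enters), and rule out the slightly entangled regime — in short, verify that the constructions of Theorems 1--3 are genuinely inverse to one another rather than merely compatible.
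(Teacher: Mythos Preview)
Your proposal is correct and follows essentially the same approach as the paper: define the forward map via Theorems~\ref{Wtou}--\ref{nearestproduct} and the inverse via \eqref{rfromtheta} and \eqref{0.rmod}, then argue that these are mutually inverse. Your treatment is in fact more thorough than the paper's, which simply exhibits the two maps and declares the correspondence a bijection without your explicit verification (via \eqref{ineq1} and \eqref{ineq2}) that $\beta(\x)$ is highly entangled and that $\alpha\circ\beta=\mathrm{id}$.
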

\begin{proof} By Theorem 3, $|\w_n\>$ is highly entangled if and only if $c_n <
1/2$. If this is the case, \thmref{Wtou} and \eqref{1.cos} show that its
nearest product state is of the form \eqref{product} where $\x = (\cos\t_1,
\ldots, \cos\t_n)$ is a unit vector in $\R^n$ in the region stated. The angles
$\t_k$ are given in terms of the coefficients $c_k$ by \eqref{0.rmod}, in which
$r$ is a function of the coefficients which, by \thmref{rsolutions}, is
uniquely defined. The nearest product states $|u_1\>|u_2\>\ldots|u_n\>$ are
determined by these angles, up to a phase $\phi$, by $|u_k\> = \sin\t_k|0\> +
\e^{i\phi}\cos\t_k|1\>$, so there is only one nearest product state with real
non-negative coefficients, and only one unit vector $\x$, for each highly
entangled state $|\w_n\>$. Conversely, given a unit vector $\x =
(\cos\t_1,\ldots,\cos\t_n)$, the quantity $r$ is determined by
\eqref{rfromtheta}, and then the coefficients $c_1,\ldots,c_n$ are determined
by \eqref{0.rmod}. Thus the correspondences \eqref{bijec} are bijections.
\end{proof}

The equations (\ref{1.r}$_\pm$) cannot always be explicitly solved to give
analytic expressions for $r$ in terms of the coefficients $c_k$. However, in
some cases, including all states for $n=3$, explicit solutions can be obtained.
Then the angles $\t_k$ can be calculated from \eqref{0.rmod} and
eq.\eqref{gdef} gives a formula for the maximal product overlap $g(|\w_n\>)$.
This formula is valid unless any of the angles $\t_k$ vanishes, and restores
all known results for the maximal overlap of highly entangled W states. When
$n=3$ it coincides with the formula (31) in Ref.\cite{tri}. When
$c_1=c_2=\cdots=c_n$  it coincides  with the formula (52) in
Ref.\cite{shim-04}. And when $n=4$ and $c_3=c_4$ it coincides with the formula
(37) derived in Ref.\cite{toward}.

When $ \max(c_1^2,c_2^2,\cdots c_n^2)=r_2^2=1/2$ the two expressions for
$g(|\w_n\>)$ given in \thmref{nearestproduct} coincide; these states are shared
quantum states. The nearest product states and maximal overlaps of shared
states are given by the first case of \thmref{nearestproduct}, but also they
appear as asymptotic limits of the second case. Indeed, at the limit $\t_n\to0$
we have
\begin{equation}\label{2.app}
\lim_{\t_n\to0}2r\sin\t_n\to c_n,\; \lim_{\t_n\to0}2r\cos\t_k\to c_k,\,k\neq n.
\end{equation}
Thus the angle $\t_n$ vanishes and the length of the vector $\rv$ goes to
infinity, but their product has a finite limit. Substituting these limits into
Eq.(\ref{0.dircos}) one obtains $c_n^2\to r_2^2$. Therefore entangled regions
of highly and slightly entangled states are separated by the surface
$c_n^2=1/2$; for states on the surface, $r\to\infty$. All of these states can
be used as a quantum channel for the perfect teleportation and superdense
coding~\cite{shared}.

\smallskip

\paragraph{Summary.}We have constructed correspondences between W states,
$n$-dimensional unit vectors and separable pure states. The map reveals two
critical values for quantum state parameters. The first critical value
separates symmetric and asymmetric entangled regions of highly entangled
states, whiles the second one separates highly and slightly entangled states.
The method gives an explicit expressions for the geometric measure when the
state allows analytical solutions, otherwise it expresses the entanglement as
an implicit function of state parameters.

It should be noted that the bijection between W states and $n$-dimensional unit
vectors is not related directly to the geometric measure of entanglement.
Therefore it is possible to extend the method to other entanglement measures.
To this end one creates an appropriate bijection between unit vectors and
optimization points of an entanglement measure one wants to compute. This work is in progress.
\begin{acknowledgments}
This work was supported by ANSEF grant PS-1852.
\end{acknowledgments}

\end{document}